\documentclass[11pt, a4paper]{article}

\usepackage[round, longnamesfirst, sort&compress]{natbib}

\usepackage{geometry} 
\geometry{a4paper, textwidth=5.5in, textheight=8.5in} 
\setlength\parindent{0in} 

\usepackage{fullpage}

\usepackage{amsmath}
\usepackage{amsthm}
\usepackage{amssymb}

\DeclareMathOperator*{\argmin}{arg\,min}
\newcommand{\pol}[1]{^{(#1)}}

\newcommand\numberthis{\addtocounter{equation}{1}\tag{\theequation}}

\begin{document}

\title{Gradient of the Objective Function for an Anisotropic Centroidal Voronoi
Tessellation (CVT) - A revised, detailed derivation}

\author{Giacomo Parigi\thanks{Corresponding author: giacomo.parigi@gmail.com},
Marco Piastra\\
 \small Computer Vision and Multimedia Lab,\\
 \small University of Pavia,\\
 \small Via Ferrata 1 - 27100 Pavia (PV), Italy}

\date{}

\maketitle

\abstract{In their recent article \citeyearpar{levy2010lpcvt}, Levy and Liu
introduced a generalization of Centroidal Voronoi Tessellation (CVT) - namely
the $L_p$-CVT - that allows the computation of an anisotropic CVT over a sound
mathematical framework. In this article a new objective function is
defined, and both this function and its gradient are derived in
closed-form for surfaces and volumes. This method opens a wide range of
possibilities, also described in the paper, such as quad-dominant surface
remeshing, hex-dominant volume meshing or fully-automated capturing of sharp
features. However, in the same paper, the derivations of the gradient and
of the new objective function are only partially expanded, in the appendices,
and some relevant requisites on the anisotropy field are left implicit. In order to better
harness the possibilities described there, in this work  the entire derivation process is made explicit.
In the authors' opinion, this also helps understanding the working conditions of
the method and its possible applications.

{\bf keywords:} Centroidal Voronoi tessellation, anisotropic meshing, surface
reconstruction, topology preservation, computational geometry and object
modeling}

\newtheorem{theorem}{Theorem}[section]

\section*{Introduction} In their recent article \citeyearpar{levy2010lpcvt}, Levy
and Liu introduce a generalization of Centroidal Voronoi Tessellation - namely
the $L_p$-CVT - that ``minimizes a higher-order moment of the coordinates on the
Voronoi cells''.  Levy and Liu take as reference the standard CVT objective function, 
and extend its behavior injecting the $L_p$ norm and an anisotropy term, in the form of a
matrix obtained from the anisotropy field, in the function itself. This method
opens a wide range of possible applications, also described in the article, such as
quad-dominant surface remeshing, hex-dominant volume meshing or fully-automated
capturing of sharp features. In particular, in the authors' opinion, this method
could also increase the resistance to noise in surface reconstruction and
remeshing, which is relevant in the application of methods such as the one described in 
\cite{piastra2013self}.

In the work by \cite{levy2010lpcvt}, however, the derivation of the gradient, as well as the
definition of the new objective function, are only partially described, and
some conditions on the anisotropy field are left implicit. Application of
the same method under different conditions, e.g. a specific anisotropy field, or for
different purposes, involves a complete comprehension of the mathematical frame on
which the method is based. This work is intended to analyze thoroughly the
derivation both of the objective function and of its gradient, in order to
understand the functioning of the method and the conditions of applicability.

For sake of clarity, the notation used here is slightly different from the one
in the original work, due to of the different structure of this paper.

\section{The Energy Function $F_{L_p}$}
\subsection{The objective function of $L_p$-Centroidal Voronoi Tessellation}
\label{sec:objective}
Given a set $\mathbf{W}$ of $k$ vectors $\mathbf{w}_1, \ldots, \mathbf{w}_k \in
\mathbb{R}^n$, the set of all points in $\mathbb{R}^n$ for which a particular
$\mathbf{w}_i$ is the nearest vector is called the \textit{Voronoi Region}
$\Omega_i$ of this vector, defined as:
\begin{equation}\label{Region}
	\Omega_i := \big\{\mathbf{x} \in \mathbf{R}^n \,\big|\, i = \argmin\limits_{j
\in [1, \ldots, k]}\| \mathbf{x} - \mathbf{w}_j\|\big\}.
\end{equation}

The partition of $\mathbb{R}^n$, or of a manifold $\boldsymbol{\Omega} \subseteq
\mathbb{R}^n$, formed by all the Voronoi regions is called \textit{Voronoi} or
\textit{Dirichlet Tessellation}, and each vector $\mathbf{w}_i$ is referred to
as \textit{Voronoi landmark} or \textit{generator}.

A \textit{Centroidal Voronoi Tessellation} \citep*{du1999centroidal} is a
Voronoi tessellation whose generating points are the centroids (centers of
mass) of the corresponding Voronoi regions, and it minimizes an energy function
$F_{CVT}(\mathbf{W})$, the expected quantization
error\footnote{In some works, like \cite{du1999centroidal}, each point $x$ in
the manifold has a probability value, affecting the integral value. In the work
of \citet{levy2010lpcvt}, however, the function $P(x)$ is not
considered.}, defined as:
\begin{equation}\label{cvt}
	F_{CVT}(\mathbf{W}) = \int_{\boldsymbol{\Omega}}
\left \|\mathbf{x} - \mathbf{w}_{i(\mathbf{x})} \right \|^2 \, d\mathbf{x},
\end{equation}
where $i(\mathbf{x})$ is the index $i$ of the Voronoi landmark nearest to
$\mathbf{x}$.

When the partitioned manifold is a set
$\boldsymbol{\Omega} \subset \mathbb{R}^n$, some of the Voronoi regions will not be
completely contained in $\boldsymbol{\Omega}$, thus leading to the definition of a
\textit{restricted Voronoi cell} as $\Omega_i \cap \boldsymbol{\Omega}$. We can
then decompose the integral in \eqref{cvt} as the sum of the integrals
calculated on each restricted Voronoi region:
\begin{equation}\label{cvtr}
	F_{CVT}(\mathbf{W}) = \sum_i \int_{\Omega_i\cap\boldsymbol{\Omega}}
\left \|\mathbf{x} - \mathbf{w}_i \right \|^2 \, d\mathbf{x},
\end{equation}

$L_p$-Centroidal Voronoi Tessellation \citep[$L_p$-CVT]{levy2010lpcvt} enables
the meshing of a manifold $\boldsymbol{\Omega}$ to be controlled by a given
anisotropy field. It is defined as the minimizer of the
$L_p$-CVT objective function $F_{L_p}$, obtained by injecting an anisotropy
term, i.e. the anisotropy matrix $\mathbf{M_x}$, and the $L_p$-norm into
the standard CVT energy \eqref{cvtr}:
\begin{equation}\label{flp}
	F_{L_p}(\mathbf{W}) = \sum_i \int_{\Omega_i\cap\boldsymbol{\Omega}}
\left \|\mathbf{M_x}(\mathbf{x} - \mathbf{w}_i) \right \|_p^p \, d\mathbf{x},
\end{equation}
here $\|\cdot\|_p$ denotes the $L_p$-norm ($\|\mathbf{V}\|_p = \sqrt[p]{|x|^p
+ |y|^p + |z|^p}$ and $\|\mathbf{V}\|_p^p = |x|^p + |y|^p + |z|^p$). For even
values of $p$, the  $L_p$-norm becomes $\|\mathbf{V}\|_p^p = x^p + y^p + z^p$. We will examine the case in
which the boundary of $\mathbf{\Omega}$ is a piecewise linear complex
(PLC)\footnote{in
\cite{levy2010lpcvt} it is necessary to define the surface as a piecewise linear complex (PLC) for the construction of the tetrahedron in theorem \ref{levy}. It could possible to keep a more general approach defining a new method for the construction of the tetrahedrons, allowing to use generic surfaces instead of PLCs.} $\mathcal{S}$.

In \cite{levy2010lpcvt} the anisotropy field is characterized by a symmetric
tensor field $\mathbf{G_x}$. The \emph{spectral theorem} states that:
\begin{theorem}[Spectral Theorem]\label{specTh} If a $n \times n$ matrix
$\mathbf{A}$ is symmetric, then there is a basis ${\mathbf{e}_1, \ldots, \mathbf{e}_n}$ of
$\mathbb{R}^n$ whose elements are $n$ eigenvectors of $\mathbf{A}$.
\end{theorem}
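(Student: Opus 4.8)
The plan is to prove the statement by induction on the dimension $n$, the crucial analytic input being the fact that a real symmetric matrix always admits a real eigenvalue together with a real eigenvector. The base case $n = 1$ is immediate, since every nonzero vector in $\mathbb{R}^1$ is an eigenvector. For the inductive step, assuming the claim for dimension $n-1$, the first task is to extract a single real eigenvector $\mathbf{e}_1$ of $\mathbf{A}$; once this is done, the remaining $n-1$ eigenvectors will come from restricting $\mathbf{A}$ to the orthogonal complement of $\mathbf{e}_1$.

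To produce the real eigenpair I would argue as follows. By the fundamental theorem of algebra the characteristic polynomial $\det(\mathbf{A} - \lambda \mathbf{I})$ has a root $\lambda \in \mathbb{C}$, and hence a nonzero (a priori complex) vector $\mathbf{v}$ with $\mathbf{A}\mathbf{v} = \lambda \mathbf{v}$. I would then show $\lambda$ is real by evaluating $\bar{\mathbf{v}}^{\top} \mathbf{A}\mathbf{v}$ two ways: directly it equals $\lambda\, \bar{\mathbf{v}}^{\top}\mathbf{v}$, while using that $\mathbf{A}$ has real entries and $\mathbf{A}^{\top} = \mathbf{A}$ it also equals $(\mathbf{A}\bar{\mathbf{v}})^{\top}\mathbf{v} = \overline{(\mathbf{A}\mathbf{v})}^{\top}\mathbf{v} = \bar\lambda\, \bar{\mathbf{v}}^{\top}\mathbf{v}$. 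Since $\bar{\mathbf{v}}^{\top}\mathbf{v} = \|\mathbf{v}\|^2 > 0$, this forces $\lambda = \bar\lambda$, so $\lambda$ is real; then $\mathbf{A} - \lambda\mathbf{I}$ is a singular \emph{real} matrix, whose kernel therefore contains a real unit vector, which we take as $\mathbf{e}_1$. This is the step I expect to be the main obstacle, since it is the only place where one must leave $\mathbb{R}^n$ and use an argument specific to symmetric matrices; everything else is bookkeeping.

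Finally I would pass to $W := \mathbf{e}_1^{\perp}$, a subspace of dimension $n-1$, and check that $W$ is $\mathbf{A}$-invariant: for $\mathbf{w} \in W$ we have $\langle \mathbf{A}\mathbf{w}, \mathbf{e}_1\rangle = \langle \mathbf{w}, \mathbf{A}\mathbf{e}_1\rangle = \lambda\langle \mathbf{w}, \mathbf{e}_1\rangle = 0$, so $\mathbf{A}\mathbf{w} \in W$. Expressing the restriction $\mathbf{A}|_W$ in any orthonormal basis of $W$ gives an $(n-1)\times(n-1)$ symmetric matrix, so the induction hypothesis yields a basis $\{\mathbf{e}_2, \ldots, \mathbf{e}_n\}$ of $W$ consisting of eigenvectors of $\mathbf{A}|_W$, hence of $\mathbf{A}$. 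Then $\{\mathbf{e}_1, \mathbf{e}_2, \ldots, \mathbf{e}_n\}$ is a basis of $\mathbb{R}^n$ — it is $\mathbf{e}_1$ adjoined to a basis of its orthogonal complement — made entirely of eigenvectors of $\mathbf{A}$, which closes the induction. As a remark I would note that the appeal to complexification in the second paragraph can be replaced by a variational argument: maximizing the Rayleigh quotient $\mathbf{x}^{\top}\mathbf{A}\mathbf{x}$ over the (compact) unit sphere and verifying through a Lagrange-multiplier computation that a maximizer is an eigenvector; the same orthogonal-complement induction then applies verbatim.
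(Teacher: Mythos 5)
Your proof is correct: the induction on $n$, the complexification argument showing that every eigenvalue of a real symmetric matrix is real, and the passage to the $\mathbf{A}$-invariant orthogonal complement $\mathbf{e}_1^{\perp}$ together constitute the standard textbook proof of the spectral theorem, and each step (in particular the use of an \emph{orthonormal} basis of $W$ so that $\mathbf{A}|_W$ is again symmetric) is handled properly. Note, however, that the paper itself offers no proof of this statement at all --- it quotes the spectral theorem as a classical fact in order to factor the anisotropy tensor $\mathbf{G_x}$ as $\mathbf{M}_{\mathbf{x}}^t\mathbf{M_x}$ --- so there is no in-paper argument to compare yours against; your write-up simply supplies the standard proof the authors took for granted.
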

This means that we can decompose $\mathbf{G_x}$ as:
\begin{equation}\label{spectral}
	\mathbf{G_x} = \mathbf{Q}\mathbf{\Lambda}\mathbf{Q}^t,
\end{equation}
where $\mathbf{Q}$ is a matrix whose columns are the eigenvectors $\mathbf{q}_i$
of $\mathbf{G}$ and $\mathbf{\Lambda}$ is a diagonal matrix whose diagonal
entries are the relative eigenvalues $\lambda_i$.

In \cite{levy2010lpcvt} it is implicitly assumed that $\mathbf{\Lambda}$ is
positive definite, so that it is possible to define $\mathbf{\Sigma}$ and
$\mathbf{M_x}$ as: $$
\mathbf{\Sigma\Sigma}^t = \mathbf{\Lambda},\, \sigma_{ii} =
\sqrt{\lambda_{ii}}\,;\qquad \mathbf{M_x} := \mathbf{\Sigma}^t\mathbf{Q}^t =
\left(\mathbf{Q \Sigma}\right)^t,
$$
and we can rewrite \eqref{spectral} as:
\begin{equation}\label{M_T}
	\mathbf{G_x} = \mathbf{M}_{\mathbf{x}}^t\mathbf{M_x}.
\end{equation}

If we write explicitly $\mathbf{M_x}$ we can see that:
\begin{equation}\label{MT}
	\mathbf{M_x} = \mathbf{\Sigma}^t\mathbf{Q}^t =
\begin{bmatrix}
\sigma_1 & 0 & 0 \\
0 & \sigma_2 & 0 \\
0 & 0 & \sigma_3
\end{bmatrix}
\begin{bmatrix}
q_{1x} & q_{1y} & q_{1z} \\
q_{2x} & q_{2y} & q_{2z} \\
q_{3x} & q_{3y} & q_{3z} \\
\end{bmatrix} =
\begin{bmatrix}
\sigma_1q_{1x} & \sigma_1q_{1y} & \sigma_1q_{1z} \\
\sigma_2q_{2x} & \sigma_2q_{2y} & \sigma_2q_{2z} \\
\sigma_3q_{3x} & \sigma_3q_{3y} & \sigma_3q_{3z} \\
\end{bmatrix} =
\begin{bmatrix}
\left[\sigma_1\mathbf{q}_1\right]^t \\
\left[\sigma_2\mathbf{q}_2\right]^t \\
\left[\sigma_3\mathbf{q}_3\right]^t \\
\end{bmatrix}.
\end{equation}

In order to understand the meaning of the matrix $\mathbf{M_x}$ we will examine
the quadratic form of $\mathbf{G_x(v)}:\mathbb{R}^n \to \mathbb{R}$ applied to a generic vector
$\mathbf{v}$:
\begin{equation}\label{quadratic}
	\mathbf{G_x(v)} = \mathbf{v}^t\mathbf{G_x}\mathbf{v} =
\mathbf{v}^t\mathbf{M_x}^t\mathbf{M_x}\mathbf{v} =
\left(\mathbf{M_x}\mathbf{v}\right)^t\mathbf{M_x}\mathbf{v} =
\left\|\mathbf{M}_{\mathbf{x}}\mathbf{v}\right\|^2.
\end{equation}

If we define the vector $\mathbf{v}_i:= \mathbf{x} -
\mathbf{w}_{i(\mathbf{x})}$, i.e. the distance between a generic point $\mathbf{x}
\in \mathcal{S}$ and its nearest Voronoi landmark inside the corresponding Voronoi
region $\Omega_i$, then \eqref{quadratic} can be interpreted as the squared error and we can
calculate the expected quantization error induced by the Voronoi tessellation,
considering the anisotropy $\mathbf{G_x}$, as:
$$
\sum_i \int_{\Omega_i\cap\boldsymbol{\Omega}}
\left \|\mathbf{M}_{\mathbf{x}}\left(\mathbf{x} - \mathbf{w}_i\right) \right \|_2^2 \,
d\mathbf{x}, $$
which is just $F_{L_2}$, i.e. the $L_2$-based version of \eqref{flp}.

\subsection{The integration $F^T_{L_p}$ over an integration simplex $T$}
\begin{theorem}\label{levy} Given that each closed Voronoi
region in three dimension is a convex polyhedron \citep{okabe2000spatial} and
that the surface $\mathcal{S}$ is a piecewise linear complex, we can divide each region
$\Omega_i \cap \boldsymbol{\Omega}$ into tetrahedrons, i.e. three-dimensional
simplices, and further decompose the integral in \eqref{flp}. Assuming
(a) that each tetrahedron is formed by a Voronoi landmark $\mathbf{w}_i$ and
three vertices\footnote{The subscripts should indicate even that this vertices are
from a particular tetrahedron belonging to the $i$-th Voronoi cell, but for
sake of clarity we will omit those details, leaving them implied.} $\mathbf{C_1,
C_2, C_3}$, (b) that the determinant of the transformation matrix $\mathbf{M}_T$
is $1$, i.e. the transformation doesn't change the volume or energy and (c) that
the index $p$ of the $L_p$-norm is an even number, the integration $F^T_{L_p}$
of the quantization error over each integration simplex $T = T(\mathbf{w}_i,
\mathbf{C_1, C_2, C_3})$, is given by
\citep[see][and appendix \ref{simplexint},
equation \eqref{multi}]{levy2010lpcvt}:
\begin{equation}\label{ftlp}\begin{split}
	F^T_{L_p}&= \int_T \|\mathbf{M}_T(\mathbf{x - w}_i)\|_p^p \, d\mathbf{x}\\
&= \frac{|T|}{\binom{n + p}{n}} \sum_{\alpha + \beta + \gamma = p}
\overline{\mathbf{U}^{*\alpha}_1 *\mathbf{U}^{*\beta}_2 *
\mathbf{U}^{*\gamma}_3},
\end{split}
\end{equation}
\begin{equation*}
	\mbox{where:}\begin{array}{clcl} & \mathbf{U}_j & =
&\quad\mathbf{M}_T(\mathbf{C}_j - \mathbf{w}_i) \\
& \mathbf{V_1 * V_2} & = &\quad[x_1x_2, y_1y_2, z_1z_2]^t \\
& \mathbf{V}^{*\alpha} & = &\quad\mathbf{V * V * \ldots * V}\mbox{  ($\alpha$
times)} \\
& \overline{\mathbf{V}} & = &\quad x + y + z
\end{array}
\end{equation*}
\end{theorem}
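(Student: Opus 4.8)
The plan is to reduce the integral to the classical Dirichlet integral over a standard simplex. First I would introduce barycentric coordinates on $T = T(\mathbf{w}_i, \mathbf{C}_1, \mathbf{C}_2, \mathbf{C}_3)$: every $\mathbf{x}\in T$ is uniquely $\mathbf{x} = \mathbf{w}_i + \lambda_1(\mathbf{C}_1-\mathbf{w}_i) + \lambda_2(\mathbf{C}_2-\mathbf{w}_i) + \lambda_3(\mathbf{C}_3-\mathbf{w}_i)$ with $(\lambda_1,\lambda_2,\lambda_3)$ ranging over $\Delta := \{\lambda_j \ge 0,\ \lambda_1+\lambda_2+\lambda_3 \le 1\}$. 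Then $\mathbf{x} - \mathbf{w}_i = \sum_j \lambda_j(\mathbf{C}_j - \mathbf{w}_i)$, and since $\mathbf{M}_T$ is linear, $\mathbf{M}_T(\mathbf{x}-\mathbf{w}_i) = \lambda_1\mathbf{U}_1 + \lambda_2\mathbf{U}_2 + \lambda_3\mathbf{U}_3$, with $\mathbf{U}_j = \mathbf{M}_T(\mathbf{C}_j - \mathbf{w}_i)$ as in the statement. The change of variables $\mathbf{x}\mapsto(\lambda_1,\lambda_2,\lambda_3)$ has constant Jacobian $|\det[\mathbf{C}_1-\mathbf{w}_i,\ \mathbf{C}_2-\mathbf{w}_i,\ \mathbf{C}_3-\mathbf{w}_i]| = n!\,|T|$ (here $n=3$), so $d\mathbf{x} = n!\,|T|\,d\lambda$.

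Next I would use hypothesis (c). Because $p$ is even, $|t|^p = t^p$, hence $\|\mathbf{M}_T(\mathbf{x}-\mathbf{w}_i)\|_p^p = \sum_c (\lambda_1 U_{1c} + \lambda_2 U_{2c} + \lambda_3 U_{3c})^p$, where $c$ runs over the $n=3$ coordinates and $U_{jc}$ is the $c$-th component of $\mathbf{U}_j$. The multinomial theorem expands each summand as $\sum_{\alpha+\beta+\gamma=p}\binom{p}{\alpha,\beta,\gamma}\,\lambda_1^{\alpha}\lambda_2^{\beta}\lambda_3^{\gamma}\,U_{1c}^{\alpha}U_{2c}^{\beta}U_{3c}^{\gamma}$. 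Substituting, interchanging the finite sum with the integral, and invoking the Dirichlet integral $\int_{\Delta}\lambda_1^{\alpha}\lambda_2^{\beta}\lambda_3^{\gamma}\,d\lambda = \frac{\alpha!\,\beta!\,\gamma!}{(\alpha+\beta+\gamma+n)!} = \frac{\alpha!\,\beta!\,\gamma!}{(p+n)!}$ makes the multinomial coefficient collapse, since $\binom{p}{\alpha,\beta,\gamma}\,\frac{\alpha!\,\beta!\,\gamma!}{(p+n)!} = \frac{p!}{(p+n)!}$. Pulling out $n!\,|T|$ from the change of variables and using $n!\,\frac{p!}{(p+n)!} = \binom{n+p}{n}^{-1}$ gives $F^T_{L_p} = \frac{|T|}{\binom{n+p}{n}}\sum_c\sum_{\alpha+\beta+\gamma=p}U_{1c}^{\alpha}U_{2c}^{\beta}U_{3c}^{\gamma}$.

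To finish I would swap the two sums and read off the definitions of $*$ and of the overbar: the vector $\mathbf{U}_1^{*\alpha}*\mathbf{U}_2^{*\beta}*\mathbf{U}_3^{*\gamma}$ has $c$-th component exactly $U_{1c}^{\alpha}U_{2c}^{\beta}U_{3c}^{\gamma}$, so applying the overbar (which sums the three components) yields $\overline{\mathbf{U}_1^{*\alpha}*\mathbf{U}_2^{*\beta}*\mathbf{U}_3^{*\gamma}} = \sum_c U_{1c}^{\alpha}U_{2c}^{\beta}U_{3c}^{\gamma}$. Therefore $\sum_c\sum_{\alpha+\beta+\gamma=p}U_{1c}^{\alpha}U_{2c}^{\beta}U_{3c}^{\gamma} = \sum_{\alpha+\beta+\gamma=p}\overline{\mathbf{U}_1^{*\alpha}*\mathbf{U}_2^{*\beta}*\mathbf{U}_3^{*\gamma}}$, which is precisely \eqref{ftlp}.

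There is no deep obstacle; every step is a standard manipulation, but three points need care. The first is getting the Jacobian constant right, so that the $n!$ it contributes exactly matches the $n!$ implicit in the Dirichlet denominator and reassembles into $\binom{n+p}{n}$. The second is checking that hypothesis (c) (even $p$) is exactly what permits dropping the absolute values, and hence the multinomial expansion, since for odd or non-integer $p$ the expansion fails. The third is the bookkeeping that identifies $\sum_c U_{1c}^{\alpha}U_{2c}^{\beta}U_{3c}^{\gamma}$ with $\overline{\mathbf{U}_1^{*\alpha}*\mathbf{U}_2^{*\beta}*\mathbf{U}_3^{*\gamma}}$. I would also remark that hypothesis (b), $\det\mathbf{M}_T = 1$, is in fact not used in this barycentric computation; it is what makes the derivation above agree with the alternative route that first substitutes $\mathbf{y} = \mathbf{M}_T\mathbf{x}$ and integrates $\|\mathbf{y} - \mathbf{M}_T\mathbf{w}_i\|_p^p$ over the image simplex $\mathbf{M}_T(T)$, whose volume is then $|\det\mathbf{M}_T|\,|T| = |T|$.
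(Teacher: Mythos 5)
Your proof is correct, but it follows a genuinely different route from the paper's. The paper first performs the substitution $\mathbf{u}=\mathbf{M}_T(\mathbf{x}-\mathbf{w}_i)$, using hypothesis (b) to discard the Jacobian factor and to identify the volume of the image simplex $T'=T(0,\mathbf{U}_1,\mathbf{U}_2,\mathbf{U}_3)$ with $|T|$; it then writes $\|\mathbf{u}\|_p^p=\overline{\mathbf{u}^{*p}}$ as a $p$-homogeneous polynomial, constructs its polar form $H(\mathbf{u}^{(1)},\ldots,\mathbf{u}^{(p)})=\overline{\mathbf{u}^{(1)}*\cdots*\mathbf{u}^{(p)}}$ via the polarization formula, verifies symmetry and $p$-linearity, and invokes the Lasserre--Avrachenkov theorem (Theorem \ref{multiint}) on $T'$, where the terms involving the vertex $0$ drop out and only the sum over $\alpha+\beta+\gamma=p$ survives. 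You instead stay on the original simplex $T$, pass to barycentric coordinates, expand $(\lambda_1 U_{1c}+\lambda_2 U_{2c}+\lambda_3 U_{3c})^p$ by the multinomial theorem (which is where evenness of $p$ enters, exactly as in the paper's reduction \eqref{upp}), and evaluate the Dirichlet integral $\int_\Delta\lambda_1^\alpha\lambda_2^\beta\lambda_3^\gamma\,d\lambda=\alpha!\,\beta!\,\gamma!/(p+n)!$; the collapse of the multinomial coefficient against this factor, together with the Jacobian $n!\,|T|$, reassembles $\binom{n+p}{n}^{-1}$. In effect you re-prove the special case of Theorem \ref{multiint} needed here rather than citing it, which makes the argument elementary and self-contained at the cost of not exhibiting the symmetric $p$-linear form $H$ that the paper reuses conceptually. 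Your closing observation is also accurate and worth noting: hypothesis (b) plays no role in your derivation, because the volume appearing in your formula is that of the original $T$; the paper needs $\det\mathbf{M}_T=1$ only because its route passes through the transformed simplex $T'$ and silently replaces $\mathrm{Vol}(T')$ by $|T|$ in the final line.
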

\begin{proof}The general rule for integration by substitution, in the
multi-variable case, says that:
\begin{equation}\label{substitution}
	\int_U f(\varphi(\mathbf{x}))\left|\det(\mathbf{J}\varphi)(\mathbf{x})\right| \,
d\mathbf{x} = \int_{\varphi(U)} f(\mathbf{u}) \, d\mathbf{u},
\end{equation}
with the continuously differentiable substitution function $\varphi(\mathbf{x})
= \mathbf{u}$.

If we define:
$$
\varphi(\,\cdot\,) = \mathbf{M}_T(\,\cdot\, - \mathbf{w}_i); \quad f(\,\cdot\,)
=  \left\|\,\cdot\,\right\|_p^p; \quad U = T(\mathbf{w}_i, \mathbf{C}_1,
\mathbf{C}_2, \mathbf{C}_3),$$
then:
$$
f(\varphi(\mathbf{x})) =
\left\|\mathbf{M}_T(\mathbf{x} - \mathbf{w}_i)\right\|_p^p; \quad \varphi(U) =
\varphi(T) = T(0, \mathbf{U}_1, \mathbf{U}_2,
\mathbf{U}_3).
$$

Since $\mathbf{u} = \mathbf{M}_T(\mathbf{x - w}_i)$, the variables identifying
the domain become $\mathbf{u}_{\mathbf{C}_j} = \mathbf{M}_T(\mathbf{C}_j -
\mathbf{w}_i) = \mathbf{U}_j$ and $\mathbf{u}_{\mathbf{w}_i} =
\mathbf{M}_T(\mathbf{w}_i~-~\mathbf{w}_i)~=~0$. We will indicate with
$T^{\prime}$ the simplex $\varphi(T) = T(0, \mathbf{U}_1, \mathbf{U}_2,
\mathbf{U}_3)$ and the variable substitution formula becomes:
$$
\int_T \|\mathbf{M}_T(\mathbf{x - w}_i)\|_p^p
\left|\det\left(J\left(\mathbf{M}_T(\mathbf{x - w}_i)\right)\right)\right| \,
d\mathbf{x} = \int_{T^{\prime}} \left\| \mathbf{u} \right\|_p^p \, d\mathbf{u}
$$

To find the Jacobian matrix we first have to decompose
$\varphi(\mathbf{x})$ in the three components $\left[\varphi(\mathbf{x})_x, \varphi(\mathbf{x})_y,
\varphi(\mathbf{x})_z\right]$. Calling $\mathbf{d}$ the distance $(\mathbf{x}
- \mathbf{w}_i)$ and expanding $\mathbf{M}_T$ as in \eqref{MT} we obtain:
\begin{equation*}
\mathbf{M}_T\mathbf{d} =
\begin{bmatrix}
\left(\sigma_1\mathbf{q}_1\right)^t \\
\left(\sigma_2\mathbf{q}_2\right)^t \\
\left(\sigma_3\mathbf{q}_3\right)^t \\
\end{bmatrix}
\begin{bmatrix}
d_x \\ d_y \\ d_z
\end{bmatrix} =
\begin{bmatrix}
\sigma_1q_{1x}d_x & \sigma_1q_{1y}d_y & \sigma_1q_{1z}d_z \\
\sigma_2q_{2x}d_x & \sigma_2q_{2y}d_y & \sigma_2q_{2z}d_z \\
\sigma_3q_{3x}d_x & \sigma_3q_{3y}d_y & \sigma_3q_{3z}d_z \\
\end{bmatrix}.
\end{equation*}

Being $d_x = x_x - w_{ix}$, each element
$d\varphi_x/dx_x$ of the Jacobian matrix it's found in a manner similar
to:
$$ \frac{d\varphi_x}{dx_x} = \frac{d\left(\sigma_1q_{1x}(x_x - w_{ix}) +
\sigma_1q_{1y}(x_y - w_{iy}) + \sigma_1q_{1z}(x_z - w_{iz})\right)}{dx_x} =
\sigma_1q_{1x}, $$
and the Jacobian matrix of $\varphi(\mathbf{x})$ is then $\mathbf{M}_T$.

Since the determinant of $\mathbf{M}_T$ is $1$ we can express, as in
\cite{levy2010lpcvt}, the energy function as: $$
F^T_{L_p} = \int_T \|\mathbf{M}_T(\mathbf{x - w}_i)\|_p^p \, d\mathbf{x} =
\int_T \|\mathbf{M}_T(\mathbf{x - w}_i)\|_p^p
\left|\det\left(\mathbf{M}_T\right)\right| \, d\mathbf{x} =
\int_{T^{\prime}} \left\|\mathbf{u}\right\|_p^p \, d\mathbf{u}.
$$

Note that:
\begin{equation}\label{upp}
	\left\|\mathbf{u}\right\|_p^p = \left(\sqrt[p]{|x_u|^p + |y_u|^p +
|z_u|^p}\right)^p\ \stackrel{with\, p\, even}{=}\ x_u^p + y_u^p + z_u^p,
\end{equation}
therefore being $\mathbf{u}^{*p} =
\overbrace{\mathbf{u}*\mathbf{u}*\ldots*\mathbf{u}}^{p \mbox{ times}} =
\left[ x_u x_u\ldots x_u, y_u y_u\ldots y_u, z_u z_u\ldots z_u \right]^t =
\left[x_u^p, y_u^p, z_u^p\right]^t$, and being $\overline{\mathbf{u}} = x_u + y_u + z_u$,
the expression becomes:
\begin{equation}\label{up}
	F^T_{L_p} = \int_{T^{\prime}}
\overline{\mathbf{u}^{*p}} \, d\mathbf{u},
\end{equation}
that is a $p$-homogeneous polynomial in $\mathbf{u} = [x_u, y_u,
z_u]^t$. As shown in appendix \ref{simplexint}, we can associate with it a
$p$-linear symmetric form using a polarization formula (see appendix \ref{polformula}):
\begin{align*}
&H(\mathbf{u}\pol{1}, \mathbf{u}\pol{2}, \ldots, \mathbf{u}\pol{p}) =
\frac{1}{p!}\frac{\partial}{\partial \lambda_1}\ldots\frac{\partial}{\partial
\lambda_p} f(\lambda_1\mathbf{u}\pol{1} + \ldots + \lambda_p\mathbf{u}\pol{p})\\
&=\frac{1}{p!}\frac{\partial^p}{\partial\lambda_1\ldots\partial\lambda_p}\overline{{(\lambda_1\mathbf{u}\pol{1}
+ \ldots + \lambda_p\mathbf{u}\pol{p})}^p}\\
&=\frac{1}{p!}\frac{\partial^p}{\partial\lambda_1\ldots\partial\lambda_p}\left[{(\lambda_1
x\pol{1} + \ldots + \lambda_p x\pol{p})}^p + {(\lambda_1 y\pol{1} + \ldots +
\lambda_p y\pol{p})}^p + {(\lambda_1 z\pol{1} + \ldots + \lambda_p z\pol{p})}^p
\right]. \numberthis \label{polar}
\end{align*}

For the sum rule we can solve the derivative for the first parenthetical and
then apply the result to the other two. It's easy to see that
$$\frac{\partial}{\partial\lambda_1}{(\lambda_1 x\pol{1}
+ \ldots + \lambda_p x\pol{p})}^p = px\pol{1}{(\lambda_1 x\pol{1}
+ \ldots + \lambda_p x\pol{p})}^{p - 1},$$ and then
$$\frac{\partial}{\partial\lambda_2}p x\pol{1}{(\lambda_1 x\pol{1}
+ \ldots + \lambda_p x\pol{p})}^{p-1}=p(p-1)x\pol{1}
x\pol{2}{(\lambda_1 x\pol{1} + \ldots + \lambda_p x\pol{p})}^{p - 2},
$$ and so on, until the following result is obtained:
\begin{equation}\label{du} \begin{split}
	\frac{\partial^p}{\partial\lambda_1\ldots\partial\lambda_p}{(\lambda_1
x\pol{1} + \ldots + \lambda_p x\pol{p})}^p &=
p\cdot(p-1)\dotsm(2)\cdot(1)\cdot x\pol{1} x\pol{2} \ldots x\pol{p}{(\lambda_1
x\pol{1} + \ldots)}^0\\ &= p!x\pol{1}x\pol{2}\ldots x\pol{p}.
\end{split} \end{equation}

By putting \eqref{du} back into \eqref{polar}, factoring out $p!$
to cancel the denominator and using \eqref{upp} and \eqref{up}, we
obtain:
\begin{equation}\label{expr1}
	H(\mathbf{u}\pol{1}, \mathbf{u}\pol{2}, \ldots, \mathbf{u}\pol{p})  =
\overline{\mathbf{u}\pol{1} * \mathbf{u}\pol{2} * \dotsm * \mathbf{u}\pol{p}},
\end{equation}
which has to be $p$-linear and symmetric.

The symmetry of \eqref{expr1} is easy to see, since any changes in the order of the arguments of $H$
will only change the order of factors inside the parenthesis, without affecting the result.
On the other hand, $p$-linearity is proved by solving the following equation:
$$
H(\mathbf{u}\pol{1}, \ldots, \lambda\mathbf{u}\pol{i} +
\mu\mathbf{u}\pol{i^{\prime}}, \ldots, \mathbf{u}\pol{p}).
$$

Given that:$$
\lambda\mathbf{u}\pol{i} + \mu\mathbf{u}\pol{i^{\prime}} = \begin{bmatrix}
\lambda x\pol{i} \\ \lambda y\pol{i} \\ \lambda z\pol{i} \end{bmatrix} +
\begin{bmatrix} \mu x\pol{i^{\prime}} \\ \mu y\pol{i^{\prime}} \\
\mu z\pol{i^{\prime}} \end{bmatrix} = \begin{bmatrix}
\lambda x\pol{i} + \mu x\pol{i^{\prime}}\\ \lambda y\pol{i} + \mu
y\pol{i^{\prime}}\\ \lambda z\pol{i} + \mu z\pol{i^{\prime}}
\end{bmatrix},
$$
we can solve for the first operand and then repeat the same procedure twice:
$$
\left(  x^{(1)}  x^{(2)} \dotsm (\lambda x\pol{i} + \mu x\pol{i^{\prime}})\dotsm
x\pol{p} \right) = \lambda( x^{(1)}  x^{(2)} \dotsm x\pol{i}\dotsm x\pol{p}) + \mu
( x^{(1)}  x^{(2)} \dotsm x\pol{i^{\prime}} \dotsm x\pol{p}),
$$
after this, it becomes easy to see that
\begin{align*}
	\lambda( x^{(1)} \dotsm x\pol{i}\dotsm x\pol{p}) &+ \mu
( x^{(1)} \dotsm x\pol{i^{\prime}} \dotsm x\pol{p}) + \lambda( y^{(1)} \dotsm
y\pol{i}\dotsm y\pol{p}) + \mu ( y^{(1)} \dotsm y\pol{i^{\prime}} \dotsm
y\pol{p})\\
&
+ \lambda( z^{(1)} \dotsm z\pol{i}\dotsm
z\pol{p}) + \mu ( z^{(1)} \dotsm z\pol{i^{\prime}} \dotsm z\pol{p})\\
&= \lambda H(\mathbf{u}\pol{1}, \dotsc
, \mathbf{u}\pol{i}, \dotsc,  \mathbf{u}\pol{p}) + \mu H(\mathbf{u}\pol{1},
\dotsc, \mathbf{u}\pol{i^{\prime}}, \dotsc, \mathbf{u}\pol{p}).
\end{align*}

Finally, we can apply theorem \ref{multiint} to equation
\eqref{expr1} to obtain:
\begin{equation} \begin{split}
	F^T_{L_p} = \int_{T^{\prime}}
H(\mathbf{u}, \mathbf{u}, \ldots, \mathbf{u}) \, d\mathbf{u} &=
\frac{|T|}{\binom{n + p}{p}} \sum_{\sum_0^n \alpha_i = p}
\overline{\varphi(\mathbf{w}_i)^{*\alpha_0}
* \varphi(\mathbf{C}_1)^{*\alpha_1} *
\varphi(\mathbf{C}_2)^{*\alpha_2} *
\varphi(\mathbf{C}_3)^{*\alpha_3}}\\
&= \frac{|T|}{\binom{n + p}{p}} \sum_{\alpha + \beta + \gamma = p}
\overline{\mathbf{U}_1^{*\alpha} * \mathbf{U}_2^{*\beta} *
\mathbf{U}_3^{*\gamma}} \qedhere
\end{split} \end{equation}
\end{proof}

\section{Expression of $ \nabla F^T_{L_p} $}
In this section, we will obtain the expression of the overall gradient of $F$ by combining, through the 
chain rule, the gradient of $F$ on the vertices of the integration simplices together with the gradient of $F$ on Voronoi vertices.

The usual chain rule for a function of multiple variables is:
$$
\frac{\partial f(g_1(x), g_2(x), \dotsc , g_n(x))}{\partial x} = \sum_{i=1}^n
\frac{\partial f(g_i(x))}{\partial g_i(x)} \frac{\partial g_i(x)}{\partial x}.
$$
Applying the latter to $F^T_{L_p}$ we obtain:
\begin{equation}\label{chainrule1}
	\frac{\partial F_{L_p}^T (\mathbf{w}_i, \mathbf{C_1, C_2, C_3)})}{\partial\mathbf{W}} =
\frac{\partial F_{L_p}^T}{\partial\mathbf{w}_i}\frac{\partial\mathbf{w}_i}{\partial\mathbf{W}} +
\frac{\partial F_{L_p}^T}{\partial\mathbf{C_1}}\frac{\partial\mathbf{C_1}}{\partial\mathbf{W}} +
\frac{\partial F_{L_p}^T}{\partial\mathbf{C_2}}\frac{\partial\mathbf{C_2}}{\partial\mathbf{W}} +
\frac{\partial F_{L_p}^T}{\partial\mathbf{C_3}}\frac{\partial\mathbf{C_3}}{\partial\mathbf{W}}
\end{equation}
where $\frac{\partial\mathbf{A}}{\partial\mathbf{B}} = {(\frac{\partial a_i}{\partial
b_j})}_{i,j}$ denotes the Jacobian matrix of $\mathbf{A}$ with respect to
$\mathbf{B}$.

Since $F_{L_p}^T$ is a scalar function and $\mathbf{W} \in (\mathbb{R}^n)^k$,
the gradient will also belong to $(\mathbb{R}^n)^k$.

Note that
$\displaystyle\frac{\partial\mathbf{w_i}}{\partial\mathbf{W}} = \left[
\frac{\partial\mathbf{w_i}}{\partial\mathbf{w_1}} \; \cdots
\; \frac{\partial\mathbf{w_i}}{\partial\mathbf{w_k}} \right]
$
and
$$
\frac{\partial\mathbf{w_i}}{\partial\mathbf{w_j}}=
\begin{bmatrix}
\frac{\partial{w_{ix}}}{\partial{w_{jx}}} &
\frac{\partial{w_{ix}}}{\partial{w_{jy}}} &
\frac{\partial{w_{ix}}}{\partial{w_{jz}}} \\ \\
\frac{\partial{w_{iy}}}{\partial{w_{jx}}} &
\frac{\partial{w_{iy}}}{\partial{w_{jy}}} &
\frac{\partial{w_{iy}}}{\partial{w_{jz}}} \\ \\
\frac{\partial{w_{iz}}}{\partial{w_{jx}}} &
\frac{\partial{w_{iz}}}{\partial{w_{jy}}} &
\frac{\partial{w_{iz}}}{\partial{w_{jz}}}
\end{bmatrix}
= \begin{bmatrix}1 & 0 & 0 \\ 0&1&0\\0&0&1 \end{bmatrix}
$$
only if
$i=j$, so that
$\frac{\partial F_{L_p}^T}{\partial\mathbf{w}_i}\frac{\partial\mathbf{w}_i}{\partial\mathbf{W}} =
\frac{\partial F_{L_p}^T}{\partial\mathbf{w}_i}$, meaning that the other elements of
the $n \cdot k$ vector will be $0$.

Following this same reasoning, we see that all other addends are vectors in $\mathbb{R}^n$ which, 
when multiplied each by a $n \times (n \cdot k)$ matrix, produce vectors 
of dimension $n\cdot k$.

We can also express $F^T_{L_p}$ as a function of
$$
\mathbf{U}_{1,2,3} = \begin{bmatrix}\mathbf{U}_1 \\ \mathbf{U}_2 \\ \mathbf{U}_3
\end{bmatrix},
$$
which is a vector in $3 \cdot n$ dimensions.

Putting it all together, we can expand \eqref{chainrule1} as:
\begin{equation}\label{chainrule2}\begin{split}
	\frac{\partial F_{L_p}^T}{\partial\mathbf{W}} =
&\frac{\partial F_{L_p}^T}{\partial\mathbf{U}_{1,2,3}}
\frac{\partial\mathbf{U}_{1,2,3}}{\partial\mathbf{w}_i}
\frac{\partial\mathbf{w}_i}{\partial\mathbf{W}} +
\frac{\partial F_{L_p}^T}{\partial\mathbf{U}_{1,2,3}}
\frac{\partial\mathbf{U}_{1,2,3}}{\partial\mathbf{C_1}}
\frac{\partial\mathbf{C_1}}{\partial\mathbf{W}} +\\
&\frac{\partial F_{L_p}^T}{\partial\mathbf{U}_{1,2,3}}
\frac{\partial\mathbf{U}_{1,2,3}}{\partial\mathbf{C_2}}
\frac{\partial\mathbf{C_2}}{\partial\mathbf{W}} +
\frac{\partial F_{L_p}^T}{\partial\mathbf{U}_{1,2,3}}
\frac{\partial\mathbf{U}_{1,2,3}}{\partial\mathbf{C_3}}
\frac{\partial\mathbf{C_3}}{\partial\mathbf{W}}.
\end{split}
\end{equation}

\subsection{Derivation of $F^T_{L_p}$ relative to the vertices of $T$}
Starting from \eqref{chainrule2} we can now derive each component of the
equation, to obtain the final result.

To solve the above derivative, \cite{levy2010lpcvt} first define:
$$
E^T_{L_p} := \sum_{\alpha +\beta +\gamma = p} \overline{\mathbf{U}_1^{*\alpha} *
\mathbf{U}_2^{*\beta} * \mathbf{U}_3^{*\gamma}}
$$
$$
F^T_{L_p} = \frac{|T|}{\binom{n + p}{p}}E^T_{L_p}
$$
so that:
\begin{equation}\label{df}
	\partial F^T_{L_p} = \partial\left(\frac{|T|}{\binom{n + p}{p}}E^T_{L_p}\right) =
\frac{1}{\binom{n + p}{p}}\left(E^T_{L_p}\partial |T| + |T|\partial
E^T_{L_p}\right)
\end{equation}

The first two steps of the derivation will therefore be the two derivatives
found in \eqref{df}: namely $\displaystyle\frac{\partial
E^T_{L_p}}{\partial\mathbf{U}_{1,2,3}}$ and $\displaystyle\frac{\partial
|T|}{\partial\mathbf{U}_{1,2,3}}$.
\paragraph{First step\\}
We first show the derivation of $\displaystyle\frac{\partial
E^T_{L_p}}{\partial\mathbf{U}_{1,2,3}}$.

Recalling that
\begin{align*}
	\partial E^T_{L_p} &=
\partial \left(
\sum_{\alpha +\beta +\gamma = p}
(x_1^{\alpha}x_2^{\beta}x_3^{\gamma} +
y_1^{\alpha}y_2^{\beta}y_3^{\gamma} +
z_1^{\alpha}z_2^{\beta}z_3^{\gamma})
\right)\\
&= \sum_{\alpha +\beta +\gamma = p}
\Big(\partial(x_1^{\alpha}x_2^{\beta}x_3^{\gamma}) +
\partial(y_1^{\alpha}y_2^{\beta}y_3^{\gamma}) +
\partial(z_1^{\alpha}z_2^{\beta}z_3^{\gamma})\Big),
\end{align*}
we can now examine the derivation
$\displaystyle\frac{\partial E^T_{L_p}}{\partial\mathbf{U}_{1,2,3}}$ step by step:
$$
\frac{\partial E^T_{L_p}}{\partial\mathbf{U}_{1,2,3}} =
\left[\frac{\partial E^T_{L_p}}{\partial\mathbf{U_1}},
\frac{\partial E^T_{L_p}}{\partial\mathbf{U_2}}, \frac{\partial
E^T_{L_p}}{\partial\mathbf{U_3}} \right];
$$
$$
\frac{\partial E^T_{L_p}}{\partial\mathbf{U_1}} =
\left[
\frac{\partial E^T_{L_p}}{\partial x_1},
\frac{\partial E^T_{L_p}}{\partial y_1},
\frac{\partial E^T_{L_p}}{\partial z_1}
\right];
$$
that, for every single combination with $\alpha\geq 1$, equals to:
\begin{align*}
	\left[
\frac{\partial(x_1^{\alpha}x_2^{\beta}x_3^{\gamma})}{\partial x_1} +
\frac{\partial(y_1^{\alpha}y_2^{\beta}y_3^{\gamma})}{\partial x_1} +
\frac{\partial(z_1^{\alpha}z_2^{\beta}z_3^{\gamma})}{\partial x_1},\right.
&\frac{\partial(x_1^{\alpha}x_2^{\beta}x_3^{\gamma})}{\partial y_1} +
\frac{\partial(y_1^{\alpha}y_2^{\beta}y_3^{\gamma})}{\partial y_1} +
\frac{\partial(z_1^{\alpha}z_2^{\beta}z_3^{\gamma})}{\partial y_1},\\
&\frac{\partial(x_1^{\alpha}x_2^{\beta}x_3^{\gamma})}{\partial z_1} +\left.
\frac{\partial(y_1^{\alpha}y_2^{\beta}y_3^{\gamma})}{\partial z_1} +
\frac{\partial(z_1^{\alpha}z_2^{\beta}z_3^{\gamma})}{\partial z_1} \right]\\
&= \left[
\alpha x_1^{\alpha - 1}x_2^{\beta}x_3^{\gamma},
\alpha y_1^{\alpha - 1}y_2^{\beta}y_3^{\gamma},
\alpha z_1^{\alpha - 1}z_2^{\beta}z_3^{\gamma}
\right]\\
&= \alpha \mathbf{U}_1^{*(\alpha - 1)} *
\mathbf{U}_2^{*\beta} * \mathbf{U}_3^{*\gamma}.
\end{align*}

And finally:

$$ \frac{\partial E^T_{L_p}}{\partial\mathbf{U}_{1,2,3}} = {\begin{bmatrix}
\sum\limits_{\alpha +\beta +\gamma = p; \alpha \geq 1} \alpha \mathbf{U}_1^{*(\alpha - 1)} *
\mathbf{U}_2^{*\beta} * \mathbf{U}_3^{*\gamma} \\ \\
\sum\limits_{\alpha +\beta +\gamma = p; \beta \geq 1} \beta
\mathbf{U}_1^{*\alpha} * \mathbf{U}_2^{*(\beta - 1)} * \mathbf{U}_3^{*\gamma} \\ \\
\sum\limits_{\alpha +\beta +\gamma = p; \gamma \geq 1} \gamma
\mathbf{U}_1^{*\alpha} * \mathbf{U}_2^{*\beta} * \mathbf{U}_3^{*(\gamma - 1)} \end{bmatrix}}^t
$$

\paragraph{Second step\\}

Here we will show the derivation of
$\displaystyle\frac{\partial |T|}{\partial\mathbf{U}_{1,2,3}}$.

$T$ is a three-dimensional simplex, i.e. a tetrahedron, having one of its vertices in the origin. This means that $\mathbf{U}_1, \mathbf{U}_2$ and $\mathbf{U}_3$ are all edges of the tetrahedron and therefore its volume is:
$$
|T| = 1/6 \mathbf{U_1 \cdot (U_2 \times U_3)} = 1/6 \mathbf{U_2 \cdot (U_3
\times U_1)} = 1/6 \mathbf{U_3 \cdot (U_1 \times U_2)},
$$
The gradient of the volume with respect to $\mathbf{U}_{1,2,3}$ is:
$$
\frac{\partial |T|}{\partial\mathbf{U}_{1,2,3}} = \frac{1}{6} \Big[
\left[\mathbf{U_2 \times U_3} \right]^t,\,
\left[\mathbf{U_3 \times U_1} \right]^t,\,
\left[\mathbf{U_1 \times U_2} \right]^t \Big].
$$

To prove this,  we will examine the increment $|T|\pol{\varepsilon}$ of $|T|$ with respect to $\mathbf{U}_1$, with $\mathbf{U}_1\pol{\varepsilon} = \mathbf{U}_1 + \varepsilon\mathbf{v}$. From definitions, we know that:
\begin{align*}
	|T|\pol{\varepsilon} &=
\mathbf{U}_1\pol{\varepsilon}\cdot (\mathbf{U}_2 \times \mathbf{U}_3)\\
&= \mathbf{U}_1 \cdot (\mathbf{U}_2 \times \mathbf{U}_3) +
\varepsilon\mathbf{v}\cdot (\mathbf{U}_2 \times \mathbf{U}_3) \\
&= \mathbf{U}_1 \cdot (\mathbf{U}_2 \times \mathbf{U}_3) +
\varepsilon\frac{\partial\left(\mathbf{U}_1 \cdot (\mathbf{U}_2 \times
\mathbf{U}_3)\right)}{\partial\mathbf{U}_1}\mathbf{v} + o(\varepsilon)
\end{align*}
from which
$$
\varepsilon\frac{\partial\left(\mathbf{U}_1 \cdot (\mathbf{U}_2 \times
\mathbf{U}_3)\right)}{\partial\mathbf{U}_1}\mathbf{v} = \varepsilon
(\mathbf{U}_2 \times \mathbf{U}_3)^t \mathbf{v} \Rightarrow
\frac{\partial\left(\mathbf{U}_1 \cdot (\mathbf{U}_2 \times \mathbf{U}_3)
\right)}{\partial\mathbf{U}_1} = (\mathbf{U}_2 \times \mathbf{U}_3)^t, $$
by iterating the same procedure on $\partial\left(\mathbf{U_2 \cdot (U_3
\times U_1)}\right)/\partial\mathbf{U}_2$ and $\partial\left(\mathbf{U_3 \cdot (U_1\times U_2)}\right)/\partial\mathbf{U}_3$ we get to the result.

Intuitively the latter indicates for each vertex a variation proportional to
one sixth of the facet, i.e. a base of the tetrahedron, opposite to the
derivation vertex and with the greatest rate of increase directed along
the perpendicular of that side, i.e. the height.

For completeness and self-containment of this paper, we also report the demonstration of the two-dimensional derivative contained in \cite{levy2010lpcvt} in appendix \ref{dt2d}

\paragraph{Third step\\}
The derivatives of $F^T_{L_p}$ with respect to the vertices of the tetrahedron are:
\begin{equation}\label{dfdt}
	\frac{\partial F^T_{L_p}}{\partial\mathbf{C}} =
\frac{\partial F^T_{L_p}}{\partial\mathbf{U}} \mathbf{M}_T; \qquad
\frac{\partial F^T_{L_p}}{\partial\mathbf{w}_i} = -
\frac{\partial F^T_{L_p}}{\partial\mathbf{C}_1} -
\frac{\partial F^T_{L_p}}{\partial\mathbf{C}_2} -
\frac{\partial F^T_{L_p}}{\partial\mathbf{C}_3}.
\end{equation}

In fact, since
$$
\frac{\partial\left( \mathbf{M}_T(\mathbf{C}_j - \mathbf{w}_i)
\right)}{\partial\mathbf{C}_j} = \mathbf{M}_T \mbox{ and } \frac{\partial\left(
\mathbf{M}_T(\mathbf{C}_j - \mathbf{w}_i) \right)}{\partial\mathbf{w}_i} =
-\mathbf{M}_T, $$
the derivatives of $F^T_{L_p}$ with respect to $\mathbf{w}_i,
\mathbf{C}_1, \mathbf{C}_2$ and$\mathbf{C}_3$ are given, explicitly, by:
$$
\frac{\partial F_{L_p}^T}{\partial\mathbf{w}_i} =
\frac{\partial F_{L_p}^T}{\partial\mathbf{U}_{1,2,3}}
\frac{\partial\mathbf{U}_{1,2,3}}{\partial\mathbf{w}_i} =
\frac{\partial F_{L_p}^T}{\partial\mathbf{U}_{1,2,3}}
\begin{bmatrix}
-\mathbf{M}_T \\ -\mathbf{M}_T \\ -\mathbf{M}_T
\end{bmatrix}
$$
$$
\frac{\partial F^T_{L_p}}{\partial\mathbf{C}_1} =
\frac{\partial F_{L_p}^T}{\partial\mathbf{U}_{1,2,3}}
\frac{\partial\mathbf{U}_{1,2,3}}{\partial\mathbf{C}_1} =
\frac{\partial F_{L_p}^T}{\partial\mathbf{U}_{1,2,3}}
\begin{bmatrix}
\mathbf{M}_T \\ 0 \\ 0
\end{bmatrix} = \frac{\partial F^T_{L_p}}{\partial\mathbf{U}_1} \mathbf{M}_T ;
$$
$$
\frac{\partial F^T_{L_p}}{\partial\mathbf{C}_2} =
\frac{\partial F_{L_p}^T}{\partial\mathbf{U}_{1,2,3}}
\frac{\partial\mathbf{U}_{1,2,3}}{\partial\mathbf{C}_2} =
\frac{\partial F_{L_p}^T}{\partial\mathbf{U}_{1,2,3}}
\begin{bmatrix}
0 \\ \mathbf{M}_T \\ 0
\end{bmatrix} = \frac{\partial F^T_{L_p}}{\partial\mathbf{U}_2} \mathbf{M}_T ;
$$
$$
\frac{\partial F^T_{L_p}}{\partial\mathbf{C}_3} =
\frac{\partial F_{L_p}^T}{\partial\mathbf{U}_{1,2,3}}
\frac{\partial\mathbf{U}_{1,2,3}}{\partial\mathbf{C}_3} =
\frac{\partial F_{L_p}^T}{\partial\mathbf{U}_{1,2,3}}
\begin{bmatrix}
0 \\ 0 \\ \mathbf{M}_T
\end{bmatrix} = \frac{\partial F^T_{L_p}}{\partial\mathbf{U}_3} \mathbf{M}_T ;
$$

\subsection{Gradient of Voronoi vertices}
Given that the simplex vertices, apart from $\mathbf{w}_i$, are Voronoi
vertices, i.e. the intersection of $n + 1$ Voronoi cells\footnote{More
precisely any vertex can be the intersection of \emph{at least} $n + 1$
vertices, but the case in which the number of vertices is more than the
necessary, that we can call the \emph{degenerate} case, is not discussed in
\cite{levy2010lpcvt}, nor will be discussed here.}, each of them is the circumcenter of a Delaunay simplex of vertices $\mathbf{w}_i, \mathbf{w}_j, \mathbf{w}_k, \mathbf{w}_l$, obtained by intersecting the three bisectors (Voronoi edges) between $\mathbf{w}_i$ and $\mathbf{w}_j$, $\mathbf{w}_i$ and $\mathbf{w}_k$, $\mathbf{w}_i$ and $\mathbf{w}_l$. According to \cite{goldman1994intersection} the computation of the point of intersection of three planes can be condensed to:
\begin{equation}\label{intersection}
	\frac{1}{\vec{\mathbf{n}_1}\cdot(\vec{\mathbf{n}_2}\times\vec{\mathbf{n}_3})}\left(
(\mathbf{P}_1\cdot\vec{\mathbf{n}_1})(\vec{\mathbf{n}_2}\times\vec{\mathbf{n}_3})
+
(\mathbf{P}_2\cdot\vec{\mathbf{n}_2})(\vec{\mathbf{n}_3}\times\vec{\mathbf{n}_1})
+
(\mathbf{P}_3\cdot\vec{\mathbf{n}_3})(\vec{\mathbf{n}_1}\times\vec{\mathbf{n}_2})
\right),
\end{equation}
where $\mathbf{P}_j$ is a point on the plane $j$ and $\vec{\mathbf{n}_j}$ is the unit vector normal to the same plane.

If we define each side $\mathbf{L}_j$ of the tetrahedron starting from
$\mathbf{w}_i$ as
\begin{equation}\label{L_j}
	\mathbf{L}_1 = \mathbf{w}_j - \mathbf{w}_i,\quad
\mathbf{L}_2 = \mathbf{w}_k - \mathbf{w}_i,\quad
\mathbf{L}_3 = \mathbf{w}_l - \mathbf{w}_i,
\end{equation}
we can express the points $\mathbf{P}_j$ as follows:
\begin{equation}\label{P_j}
	\mathbf{P}_1 = \frac{\mathbf{w}_j - \mathbf{w}_i}{2} +
	\mathbf{w}_i = \frac{\mathbf{w}_j + \mathbf{w}_i}{2},\quad
	\mathbf{P}_2 = \frac{\mathbf{w}_k + \mathbf{w}_i}{2},\quad
	\mathbf{P}_3 = \frac{\mathbf{w}_l + \mathbf{w}_i}{2}
\end{equation}
and the unit vectors $\vec{\mathbf{n}_j}$ can be expressed in turn as:
\begin{equation}\label{n_j}
	\vec{\mathbf{n}_1} = \frac{\mathbf{L}_1}{\left|\mathbf{L}_1\right|},\quad
	\vec{\mathbf{n}_2} = \frac{\mathbf{L}_2}{\left|\mathbf{L}_2\right|},\quad
	\vec{\mathbf{n}_3} = \frac{\mathbf{L}_3}{\left|\mathbf{L}_3\right|}.
\end{equation}

From \eqref{n_j} we can state that:
\begin{equation}\label{products}
	\vec{\mathbf{n}_j}\times\vec{\mathbf{n}_k} =
	\frac{\mathbf{L}_j\times\mathbf{L}_k}{\left|\mathbf{L}_j\right|\left|\mathbf{L}_k\right|}
\mbox{ and }
	\vec{\mathbf{n}_j}\cdot(\vec{\mathbf{n}_k}\times\vec{\mathbf{n}_l}) =
	\frac{\mathbf{L}_j\cdot\left(\mathbf{L}_k\times\mathbf{L}_l\right)}
	{\left|\mathbf{L}_j\right|\left|\mathbf{L}_k\right|\left|\mathbf{L}_l\right|}.
\end{equation}

Eventually, substituting \eqref{P_j}, \eqref{n_j} and \eqref{products} in
\eqref{intersection}, we obtain the condensed equation for each vertex
$\mathbf{C}$, defined by the Voronoi generators $\mathbf{w}_i, \mathbf{w}_{j},
\mathbf{w}_{k}, \mathbf{w}_{l}$:
\begin{equation}\label{tetinters}
	\begin{split}
	\mathbf{C} =
\frac{\left|\mathbf{L}_{1}\right|\left|\mathbf{L}_{2}\right|\left|\mathbf{L}_{3}\right|}
{\mathbf{L}_{1}\cdot\left(\mathbf{L}_{2}\times\mathbf{L}_{3}\right)}
&\left[\;
\left(\frac{\mathbf{w}_{j} + \mathbf{w}_i}{2} \cdot
\frac{\mathbf{L}_{1}}{\left|\mathbf{L}_{1}\right|}\right)
\frac{\mathbf{L}_{2}\times\mathbf{L}_{3}}
{\left|\mathbf{L}_{2}\right|\left|\mathbf{L}_{3}\right|} \right.\\
&+ \left(\frac{\mathbf{w}_{k} + \mathbf{w}_i}{2} \cdot
\frac{\mathbf{L}_{2}}{\left|\mathbf{L}_{2}\right|}\right)
\frac{\mathbf{L}_{3}\times\mathbf{L}_{1}}
{\left|\mathbf{L}_{3}\right|\left|\mathbf{L}_{1}\right|} \\
&+ \left.\left(\frac{\mathbf{w}_{l} + \mathbf{w}_i}{2} \cdot
\frac{\mathbf{L}_{3}}{\left|\mathbf{L}_{3}\right|}\right)
\frac{\mathbf{L}_{1}\times\mathbf{L}_{2}}
{\left|\mathbf{L}_{1}\right|\left|\mathbf{L}_{2}\right|}
\right].
\end{split}
\end{equation}

By solving in \eqref{tetinters} the first addend alone, we obtain:
\begin{align*}
	\frac{\left|\mathbf{L}_{1}\right|\left|\mathbf{L}_{2}\right|\left|\mathbf{L}_{3}\right|}
{\mathbf{L}_{1}\cdot\left(\mathbf{L}_{2}\times\mathbf{L}_{3}\right)}
\left(\frac{\mathbf{w}_{j} + \mathbf{w}_i}{2} \cdot
\frac{\mathbf{L}_{1}}{\left|\mathbf{L}_{1}\right|}\right)
\frac{\mathbf{L}_{2}\times\mathbf{L}_{3}}
{\left|\mathbf{L}_{2}\right|\left|\mathbf{L}_{3}\right|} &=
\frac{1}{\mathbf{L}_{1}\cdot\left(\mathbf{L}_{2}\times\mathbf{L}_{3}\right)}
\frac{(\mathbf{w}_{j} + \mathbf{w}_i)(\mathbf{w}_{j} -
\mathbf{w}_i)}{2}(\mathbf{L}_{2}\times\mathbf{L}_{3}) \\
&= \frac{1} {\mathbf{L}_{1}\cdot\left(\mathbf{L}_{2}\times\mathbf{L}_{3}\right)}
(\mathbf{L}_{2}\times\mathbf{L}_{3})
\frac{\mathbf{w}_{j}^2 - \mathbf{w}_i^2}{2}.
\end{align*}

Then, by using the same result for the other two addends, we can write:
\begin{equation}\label{Cvertex}
	\begin{split}
	\mathbf{C} =
\frac{1}{\mathbf{L}_{1}\cdot\left(\mathbf{L}_{2}\times\mathbf{L}_{3}\right)}
&\left[\;
\left(\mathbf{L}_{2}\times\mathbf{L}_{3}\right)
\left(\frac{1}{2}(\mathbf{w}_{j}^2 - \mathbf{w}_i^2)\right) \right.\\
&+ \left(\mathbf{L}_{3}\times\mathbf{L}_{1}\right)
\left(\frac{1}{2}(\mathbf{w}_{k}^2 - \mathbf{w}_i^2)\right)\\
&+ \left.\left(\mathbf{L}_1\times\mathbf{L}_{2}\right)
\left(\frac{1}{2}(\mathbf{w}_{l}^2 - \mathbf{w}_i^2)\right)
\right].
\end{split}
\end{equation}

If we consider a matrix $\mathbf{A}$ having $\mathbf{L}_{1}^t,
\mathbf{L}_{2}^t$ and $\mathbf{L}_{3}^t$ as its rows, the determinant is:
$\det\mathbf{A} = \mathbf{L}_{1}\cdot\left(\mathbf{L}_{2}\times\mathbf{L}_{3}\right)$,
and it follows from the definitions of inverse matrix and of cross product that:
$$
\mathbf{A}^{-1} =
\frac{1}{\mathbf{L}_{1}\cdot\left(\mathbf{L}_{2}\times\mathbf{L}_{3}\right)}
\begin{bmatrix}
\left[\mathbf{L}_{2}\times\mathbf{L}_{3}\right]^t \\
\left[\mathbf{L}_{3}\times\mathbf{L}_{1}\right]^t \\
\left[\mathbf{L}_{1}\times\mathbf{L}_{2}\right]^t
\end{bmatrix}.
$$

It is now easy to see that $\mathbf{C}$ may be found by:
$$
\mathbf{C} = \mathbf{A}^{-1}\mathbf{B},
\mbox{ where }
\mathbf{A} = \begin{bmatrix}
[\mathbf{w}_{j} - \mathbf{w}_i]^t \\ \\
[\mathbf{w}_{k} - \mathbf{w}_i]^t \\ \\
[\mathbf{w}_{l} - \mathbf{w}_i]^t
\end{bmatrix} ; \mathbf{B} =
\frac{1}{2}\begin{bmatrix}
\mathbf{w}_{j}^2 - \mathbf{w}_i^2 \\ \\
\mathbf{w}_{k}^2 - \mathbf{w}_i^2 \\ \\
\mathbf{w}_{l}^2 - \mathbf{w}_i^2
\end{bmatrix}.
$$

\paragraph{Fourth step\\}
To calculate $\displaystyle\frac{\partial\mathbf{C}}{\partial\mathbf{W}}$ we first need to recall some matrix derivation rules \citep{minka1997old}:
\begin{equation}\label{mat1}
	\partial(\mathbf{AB}) = \partial\mathbf{(A)B} + \mathbf{A}\partial\mathbf{(B)}
\end{equation}
\begin{equation}\label{mat2}
	\partial(\mathbf{A}^{-1}) = - \mathbf{A}^{-1}(\partial\mathbf{A})\mathbf{A}^{-1}.
\end{equation}

Using first \eqref{mat1} then \eqref{mat2}, the expression of $\partial\mathbf{C}$ can
be expanded:
\begin{align*}
	\partial\mathbf{C} = \partial(\mathbf{A}^{-1}\mathbf{B}) &= \partial(\mathbf{A}^{-1})\mathbf{B} +
\mathbf{A}^{-1}\partial(\mathbf{B}) \\
&= - \mathbf{A}^{-1}\partial(\mathbf{A})\mathbf{A}^{-1}\mathbf{B} +
\mathbf{A}^{-1}\partial(\mathbf{B}) \\
&= \mathbf{A}^{-1}\left(\partial(\mathbf{B}) - \partial(\mathbf{A})\mathbf{C}\right).
\end{align*}

Both $\partial\mathbf{B}/\partial\mathbf{W}$ and
$\partial\mathbf{A}/\partial\mathbf{W}$ depend only from $\mathbf{w}_i, \mathbf{w}_{j}, \mathbf{w}_{k}$ and $\mathbf{w}_{l}$,
just like the circumcenter $\mathbf{C}$, so we can substitute
$\partial(\mathbf{B})$ and $\partial(\mathbf{A})\mathbf{C}$ with:
$$
\partial\mathbf{B} =
\frac{1}{2}\begin{bmatrix}
\frac{\partial\mathbf{w}_{j}^2 - \mathbf{w}_i^2}{\partial\mathbf{w}_i} &
\frac{\partial\mathbf{w}_{j}^2 - \mathbf{w}_i^2}{\partial\mathbf{w}_{j}} &
\frac{\partial\mathbf{w}_{j}^2 - \mathbf{w}_i^2}{\partial\mathbf{w}_{k}} &
\frac{\partial\mathbf{w}_{j}^2 - \mathbf{w}_i^2}{\partial\mathbf{w}_{l}} \\ \\
\frac{\partial\mathbf{w}_{k}^2 - \mathbf{w}_i^2}{\partial\mathbf{w}_i} &
\frac{\partial\mathbf{w}_{k}^2 - \mathbf{w}_i^2}{\partial\mathbf{w}_{j}} &
\frac{\partial\mathbf{w}_{k}^2 - \mathbf{w}_i^2}{\partial\mathbf{w}_{k}} &
\frac{\partial\mathbf{w}_{k}^2 - \mathbf{w}_i^2}{\partial\mathbf{w}_{l}} \\ \\
\frac{\partial\mathbf{w}_{l}^2 - \mathbf{w}_i^2}{\partial\mathbf{w}_i} &
\frac{\partial\mathbf{w}_{l}^2 - \mathbf{w}_i^2}{\partial\mathbf{w}_{j}} &
\frac{\partial\mathbf{w}_{l}^2 - \mathbf{w}_i^2}{\partial\mathbf{w}_{k}} &
\frac{\partial\mathbf{w}_{l}^2 - \mathbf{w}_i^2}{\partial\mathbf{w}_{l}}
\end{bmatrix} =
\begin{bmatrix} -\mathbf{w}_i^t & \mathbf{w}_{j}^t & 0 & 0 \\ \\
-\mathbf{w}_i^t & 0 & \mathbf{w}_{k}^t & 0 \\ \\ -\mathbf{w}_i^t & 0 & 0 &
\mathbf{w}_{l}^t \end{bmatrix};
$$
$$
\partial(\mathbf{A})\mathbf{C} =
\begin{bmatrix}
\frac{\partial[\mathbf{w}_{j} - \mathbf{w}_i]^t}{\partial\mathbf{w}_i} &
\frac{\partial[\mathbf{w}_{j} - \mathbf{w}_i]^t}{\partial\mathbf{w}_{j}} &
\frac{\partial[\mathbf{w}_{j} - \mathbf{w}_i]^t}{\partial\mathbf{w}_{k}} &
\frac{\partial[\mathbf{w}_{j} - \mathbf{w}_i]^t}{\partial\mathbf{w}_{l}} \\ \\
\frac{\partial[\mathbf{w}_{k} - \mathbf{w}_i]^t}{\partial\mathbf{w}_i} &
\frac{\partial[\mathbf{w}_{k} - \mathbf{w}_i]^t}{\partial\mathbf{w}_{j}} &
\frac{\partial[\mathbf{w}_{k} - \mathbf{w}_i]^t}{\partial\mathbf{w}_{k}} &
\frac{\partial[\mathbf{w}_{k} - \mathbf{w}_i]^t}{\partial\mathbf{w}_{l}} \\ \\
\frac{\partial[\mathbf{w}_{l} - \mathbf{w}_i]^t}{\partial\mathbf{w}_i} &
\frac{\partial[\mathbf{w}_{l} - \mathbf{w}_i]^t}{\partial\mathbf{w}_{j}} &
\frac{\partial[\mathbf{w}_{l} - \mathbf{w}_i]^t}{\partial\mathbf{w}_{k}} &
\frac{\partial[\mathbf{w}_{l} - \mathbf{w}_i]^t}{\partial\mathbf{w}_{l}}
\end{bmatrix}\left(\mathbf{C}\right),
$$
that if $\frac{\partial[\mathbf{w}_{j} - \mathbf{w}_i]^t}{\partial\mathbf{w}_i}
= \begin{bmatrix} 1&0&0 & 0&1&0 & 0&0&1 \end{bmatrix}$ (that can be reasonable,
since $[\mathbf{w}_{j} - \mathbf{w}_i]^t$ is a row vector) and $\partial\mathbf{A}$
is treated as a block matrix leads to the result: $$
\partial(\mathbf{A})\mathbf{C} =
\begin{bmatrix}
- \mathbf{C}^t & \mathbf{C}^t & 0 & 0 \\ \\
- \mathbf{C}^t & 0 & \mathbf{C}^t & 0 \\ \\
- \mathbf{C}^t & 0 & 0 & \mathbf{C}^t
\end{bmatrix},
$$
getting to the result:
\begin{equation}
	\frac{\partial\mathbf{C}}{\partial\mathbf{W}} =
{\begin{bmatrix}
[\mathbf{w}_{j} - \mathbf{w}_i]^t \\ \\
[\mathbf{w}_{k} - \mathbf{w}_i]^t \\ \\
[\mathbf{w}_{l} - \mathbf{w}_i]^t
\end{bmatrix}}^{-1}
\begin{bmatrix}
\left[\mathbf{C} - \mathbf{w}_i\right]^t &
\left[\mathbf{w}_{j} - \mathbf{C}\right]^t & 0 & 0 \\ \\
\left[\mathbf{C} - \mathbf{w}_i\right]^t &
0 & \left[\mathbf{w}_{k} - \mathbf{C}\right]^t & 0 \\ \\
\left[\mathbf{C} - \mathbf{w}_i\right]^t & 0 & 0 &
\left[\mathbf{w}_{l} - \mathbf{C}\right]^t
\end{bmatrix},
\end{equation}
being the other elements of $\left(\partial(\mathbf{B}) -
\partial(\mathbf{A})\mathbf{C}\right)$ all zeroes.

\section*{Conclusions}\label{sec:conclusions}

The complete, expanded derivations presented here for the objective function of $L_p$-CVT and its gradient
help highlighting the conditions of applicability of the anisotropy field, as stated in theorem \ref{levy}. 
As a point of relevance, this shows the possibility of combining the mathematical framework of \cite{levy2010lpcvt}
with the well-known computational method of the Local Principal Component  Analysis (LPCA)
\citep{kambhatla1997dimension}; in fact, the LPCA tensor can become the anisotropy term in the 
framework presented as long as  the simple conditions stated in theorem \ref{levy} are fulfilled.
Hopefully, the present work will be the base for further studies on the application of Lp-CVT to surface reconstruction 
and remeshing from noisy point clouds.

In passing, these same derivations show a minor inaccuracy in the original derivation of
\cite{levy2010lpcvt}, with respect to surface meshing (see appendix \ref{dt2d}).

\section*{Acknowledgments}

The authors want to thank Epifanio Virga for his substantial help with the details of the derivation. Needless to say, all remaining errors and inaccuracies are entirely our responsibility.

\appendix

%

\section{Integrating a $p$-homogeneous polynomial over an $n$-dimensional
simplex}\label{simplexint}
Considering $n + 1$ points, or \emph{vertices}, $x_0, x_1, \ldots, x_n$, such
that the vectors $(x_i~-~x_0)$ are linearly independent, the $n$-dimensional
non-degenerate\footnote{The term \textit{degenerate} indicate a
simplex in which the edges starting from a given vertex aren't linearly
independent, i.e. a $n$-dimensional simplex that has intrinsic dimension $d<n$.
} simplex $\Delta_n\subset \mathbb{R}^n$ is the set of $x$ such
that $x\in\Delta_n$ if and only if $x$ is a convex combination $\sum_0^n
\lambda_{i}x_i$, with $\lambda_i \ge 0$ and $\sum_0^i \lambda_i =
1$.

We are interested in computing
$$
\int_{\Delta_n} q(x) \, dx,
$$
where $\Delta_n$ is an $n$-dimensional simplex as described above, and $q(x):
\mathbb{R}^n \to \mathbb{R}$ is a real \textit{p-homogeneous} polynomial, i.e. $q(\lambda x)=\lambda^p q(x)$ for all
$\lambda > 0$, $x\in \mathbb{R}^n$ and some integer $p \geq 0$.

We first need to introduce some concepts and notation. With every symmetric
$p$-linear form\footnote{Symmetric
$p$-linear means that the polynomial is function of $p$ variables, is invariant
in respect to the order of its variables, i.e. $H(\mathbf{x}_1, \mathbf{x}_2,
\ldots) = H(\mathbf{x}_2, \mathbf{x}_1, \ldots)$, and it's linear in each of them,
i.e. $H(\mathbf{x}_1, \ldots, \lambda\mathbf{x}_i + \mu\mathbf{x}^{\prime}_i,
\ldots, \mathbf{x}_p) = \lambda H(\mathbf{x}_1, \ldots, \mathbf{x}_i, \ldots
\mathbf{x}_p) + \mu H(\mathbf{x}_1, \ldots, \mathbf{x}^{\prime}_i, \ldots,
\mathbf{x}_p), \forall 1 \le i \le p$.} $H : (\mathbb{R}^n)^p \to \mathbb{R}$,
given by:
$$
(x_1, x_2, \ldots, x_p) \mapsto H(x_1,x_2,\ldots, x_p), \qquad\qquad x_1, x_2,
\ldots, x_p \in \mathbb{R}^n,
$$
one may associate a $p$-homogeneous polynomial $x \mapsto f(x) :=
H(\overbrace{x,x,\ldots,x}^{p \mbox{ times}})$ and conversely, using a
polarization formula (see appendix \ref{polformula}),with every $p$-homogeneous
polynomial $f : \mathbb{R}^n \to \mathbb{R}$ one may associate\footnote{Kirwan and Ryan, in
\cite{kirwan1998extendibility} state that if $f(x) \in \mathcal{P}(^pX)$, where
$\mathcal{P}(^pX)$ denotes the Banach space of bounded, $p$-homogeneous
polynomials from $X$ into $\mathbb{R}$, then there exists a unique bounded
symmetric multi-linear mapping $H : X^n \to \mathbb{R}$ such that $f(x) = H(x;
\ldots ; x), \forall x \in X$. This result is valid, in our case, since we work
in the Euclidean space, that is a banach space.} a symmetric $p$-linear form $H
: (\mathbb{R}^n)^p \to \mathbb{R}$, such that $H(\overbrace{x,x,\ldots,x}^{p
\mbox{ times}}) = f(x)$.

Therefore, we now consider the integration of a $p$-linear form $H$ over the
simplex $\Delta_n$ 	\citep{lasserre2001multi2}.

\begin{theorem}[Lasserre, Avrachenkov]\label{multiint} Let $x_0, x_1, \ldots,
x_n$ be the $n + 1$ vertices of an $n$-dimensional simplex $\Delta_n$. Then, for a symmetric
$p$-linear form $H : (\mathbb{R}^n)^p \to \mathbb{R}$, one has
\begin{equation} \label{multi}
	\int_{\Delta_n} H(x, x, \ldots, x) \, dx = \frac{Vol(\Delta_n)}{\binom{n +
p}{p}} \left[ \sum_{\sum_0^n \alpha_i = p} H(x_0^{\alpha_0}, x_1^{\alpha_1}, \ldots,
x_n^{\alpha_n}) \right]
\end{equation}
where the notation $H(x_0^{\alpha_0}, x_1^{\alpha_1}, \ldots,
x_n^{\alpha_n})$ means that $x_0$ appears $\alpha_0$ times, $x_1$ appears
$\alpha_1$ times, \ldots, $x_n$ appears $\alpha_n$ times.
\end{theorem}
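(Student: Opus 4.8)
The plan is to introduce barycentric coordinates on $\Delta_n$, use the $p$-linearity and symmetry of $H$ to rewrite $H(x,\dots,x)$ as a multinomial sum in those coordinates, and then integrate the resulting monomials by means of the classical Dirichlet integral over the standard simplex. First I would parametrize $\Delta_n$ by the affine map $(\lambda_1,\dots,\lambda_n)\mapsto x(\lambda)=x_0+\sum_{i=1}^n\lambda_i(x_i-x_0)$ with domain the standard simplex $\sigma=\{\lambda_i\ge 0,\ \sum_{i=1}^n\lambda_i\le 1\}$; putting $\lambda_0:=1-\sum_{i=1}^n\lambda_i$ recovers the convex combination $x=\sum_{i=0}^n\lambda_i x_i$ of the definition. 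The Jacobian of this map is the constant $\bigl|\det[\,x_1-x_0\mid\dots\mid x_n-x_0\,]\bigr| = n!\,Vol(\Delta_n)$, so by the change-of-variables formula $\int_{\Delta_n}g(x)\,dx = n!\,Vol(\Delta_n)\int_\sigma g(x(\lambda))\,d\lambda$ for any integrable $g$ (and our integrand, a polynomial on a compact set, is certainly integrable).

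Next, expanding $H(x,\dots,x)=H\bigl(\sum_i\lambda_i x_i,\dots,\sum_i\lambda_i x_i\bigr)$ by multilinearity produces a sum over ordered $p$-tuples of indices in $\{0,\dots,n\}$; by symmetry each term depends only on the multiplicities $\alpha=(\alpha_0,\dots,\alpha_n)$, $\sum_0^n\alpha_i=p$, with which the vertices occur, and the number of tuples realizing a given $\alpha$ is the multinomial coefficient $p!/(\alpha_0!\cdots\alpha_n!)$. This is just the multinomial theorem applied to a multilinear form, and it yields
\[
H(x,\dots,x)=\sum_{\sum_0^n\alpha_i=p}\frac{p!}{\alpha_0!\cdots\alpha_n!}\,\lambda_0^{\alpha_0}\cdots\lambda_n^{\alpha_n}\;H(x_0^{\alpha_0},\dots,x_n^{\alpha_n}).
\]
Since the coefficients $H(x_0^{\alpha_0},\dots,x_n^{\alpha_n})$ are constants, integrating over $\sigma$ reduces everything to the monomial integrals $\int_\sigma\lambda_0^{\alpha_0}\cdots\lambda_n^{\alpha_n}\,d\lambda$.

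The heart of the argument is the Dirichlet (Liouville) formula $\int_\sigma\lambda_0^{\alpha_0}\cdots\lambda_n^{\alpha_n}\,d\lambda=\dfrac{\alpha_0!\,\alpha_1!\cdots\alpha_n!}{(n+\alpha_0+\cdots+\alpha_n)!}$, with $\lambda_0=1-\sum_{i\ge1}\lambda_i$; I would prove it by induction on $n$, integrating out $\lambda_n$ over $[0,\,1-\sum_{i<n}\lambda_i]$, recognizing the inner one-dimensional integral as a Beta integral $B(\alpha_n+1,\alpha_0+1)$, and applying the inductive hypothesis with the exponent of the slack variable $\lambda_0$ raised to $\alpha_0+\alpha_n+1$ — this shifting of the slack exponent is the one point that must be tracked carefully, so the inductive statement should already carry a general slack exponent. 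Substituting this back, the product $\dfrac{p!}{\alpha_0!\cdots\alpha_n!}\cdot\alpha_0!\cdots\alpha_n!$ collapses to $p!$ independently of $\alpha$, so
\[
\int_{\Delta_n}H(x,\dots,x)\,dx=n!\,Vol(\Delta_n)\,\frac{p!}{(n+p)!}\sum_{\sum_0^n\alpha_i=p}H(x_0^{\alpha_0},\dots,x_n^{\alpha_n}),
\]
and $\dfrac{n!\,p!}{(n+p)!}=\dfrac{1}{\binom{n+p}{p}}$ gives exactly \eqref{multi}. The only real obstacle is the Dirichlet integral of the third step; everything else is bookkeeping with multilinearity and multinomial coefficients.
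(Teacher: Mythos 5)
Your proof is correct, and it is worth noting that the paper itself does not prove this statement at all: Theorem \ref{multiint} is quoted from \cite{lasserre2001multi2}, and the only accompanying remark in appendix \ref{simplexint} is the combinatorial observation that the number of terms in the sum equals $\binom{n+p}{p}$. So you have supplied a self-contained argument where the paper defers to a citation. Your route is the classical one: pull back to the standard simplex via the affine parametrization (picking up the constant Jacobian $n!\,Vol(\Delta_n)$), expand $H(x,\dots,x)$ with $x=\sum_i\lambda_i x_i$ by $p$-linearity, collapse the ordered index tuples into multisets using symmetry to get the multinomial coefficients $p!/(\alpha_0!\cdots\alpha_n!)$, and evaluate the resulting monomial integrals with the Dirichlet formula $\int_\sigma\lambda_0^{\alpha_0}\cdots\lambda_n^{\alpha_n}\,d\lambda=\alpha_0!\cdots\alpha_n!/(n+p)!$, whose product with the multinomial coefficient collapses to $p!$ uniformly in $\alpha$ — this cancellation is exactly why the final formula is an unweighted sum over multisets divided by a single binomial coefficient. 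You also correctly flag the one delicate point in the induction for the Dirichlet integral, namely that the slack variable's exponent gets promoted to $\alpha_0+\alpha_n+1$ at each step, so the inductive hypothesis must be stated with an arbitrary slack exponent. What your approach buys, beyond self-containment, is that it makes the hypotheses visibly necessary: symmetry is used precisely in the multiset regrouping, and $p$-linearity precisely in the expansion; it also degrades gracefully in the degenerate case, where both sides vanish. The original Lasserre--Avrachenkov argument is organized differently (as a multidimensional generalization of $\int_a^b x^p\,dx$ proved by induction on algebraic identities for the $q$-linear forms), but for the purposes of this paper your elementary derivation establishes exactly what equation \eqref{multi} claims.
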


The number of possible $p$-multiset of $n + 1$ objects, that is, the $n + 1$
vertices of $\Delta_n$ as the $p$ variables of $H$, is the same of the simple
$p$-combinations, i.e. without repetitions, of $(n + 1) + p - 1$ objects, that is given by the binomial
coefficient $\binom{(n + 1) + p - 1}{p} = \binom{n + p}{p}$, as stated in
(\ref{multi}).

Since every polynomial can be represented as a sum of
homogeneous polynomials, (\ref{multi}) can be easily applied to integrate an arbitrary polynomial over a
simplex.

\section{Polarization formula}\label{polformula}
Let $f(\mathbf{u})$ be a
polynomial of $n$ variables $\mathbf{u} = (u_1, u_2, \ldots u_n)$, $f$
homogeneous of degree $p$. Let $\mathbf{u}^{(1)}, \mathbf{u}^{(2)}, \ldots,
\mathbf{u}^{(p)}$ be a collection of indeterminates with $\mathbf{u}^{(i)} =
(u_1^{(i)}, u_2^{(i)}, \ldots , u_n^{(i)})$. The \textit{polar form} of $f$ is
then a polynomial in $pn$ variables
$$
F(\mathbf{u}^{(1)}, \mathbf{u}^{(2)}, \ldots, \mathbf{u}^{(p)}),
$$
which is linear in each $\mathbf{u}^{(i)}$, symmetric among the
$\mathbf{u}^{(i)}$, and such that $F(\mathbf{u}, \mathbf{u}, \ldots,
\mathbf{u}) = f(\mathbf{u})$.

The polar form of $f$ is given by the following construction:
$$
F(\mathbf{u}^{(1)}, \mathbf{u}^{(2)}, \ldots, \mathbf{u}^{(d)}) = \frac{1}{d!}
\frac{\partial}{\partial \lambda_1}\ldots \frac{\partial}{\partial \lambda_d}
f\left(\lambda_1\mathbf{u}^{(1)} + \lambda_2\mathbf{u}^{(2)} + \ldots +
\lambda_d\mathbf{u}^{(d)}\right).
$$

For example, if $\mathbf{u} = (x, y)$ and $f(\mathbf{u}) = x^2 + 3xy + 2y^2$,
the polarization of $f$ is a function in $\mathbf{u}^{(1)} = ( x^{(1)},  y^{(1)} )$
and $\mathbf{u}^{(2)} = ( x^{(2)} ,  y^{(2)})$, with $\lambda_1\mathbf{u}^{(1)} +
\lambda_2\mathbf{u}^{(2)} = \lambda_1\begin{bmatrix}  x^{(1)} \\  y^{(2)}
\end{bmatrix} + \lambda_2\begin{bmatrix}  x^{(2)} \\  y^{(2)} \end{bmatrix}$,
given by:
\begin{align*}
	&F(\mathbf{u}^{(1)},\mathbf{u}^{(2)}) = \frac{1}{2!}\frac{\partial}{\partial
\lambda_1}\frac{\partial}{\partial \lambda_2}f(\lambda_1\mathbf{u}^{(1)} +
\lambda_2\mathbf{u}^{(2)})\\
&= \frac{1}{2}\frac{\partial}{\partial
\lambda_1}\frac{\partial}{\partial \lambda_2}\left({(\lambda_1 x^{(1)} +
\lambda_2 x^{(2)})}^2 + 3(\lambda_1 x^{(1)} + \lambda_2 x^{(2)})(\lambda_1 y^{(1)} +
\lambda_2 y^{(2)}) + 2{(\lambda_1 y^{(1)} + \lambda_2 y^{(2)})}^2\right)\\
&= \frac{1}{2}\frac{\partial}{\partial\lambda_2}\left(2 x^{(1)}(\lambda_1 x^{(1)} +
\lambda2 x^{(2)}) + 3(2\lambda_1 x^{(1)} y^{(1)} + \lambda_2 x^{(2)} y^{(1)} +
\lambda_2 x^{(1)} y^{(2)}) + 4 y^{(1)}(\lambda_1 y^{(1)} + \lambda_2 y^{(2)})\right)\\
&=\frac{1}{2}(2 x^{(1)} x^{(2)} + 3 x^{(2)} y^{(1)} + 3 x^{(1)} y^{(2)} + 4 y^{(1)} y^{(2)})\\
&=  x^{(1)}  x^{(2)} +
\frac{3}{2} x^{(1)}  y^{(2)} + \frac{3}{2} x^{(2)} y^{(1)} + 2 y^{(1)}  y^{(2)}
\end{align*}

\section{Derivation of a 2D simplex with respect to
$\displaystyle\mathbf{U}_{1,2,3}$}\label{dt2d}
If $T$ is a surface simplex, i.e. a triangle,
his area is $|T| = 1/2 ||\mathbf{N}||$, with $$
\mathbf{N} = (\mathbf{U_1} - \mathbf{U_3}) \times
(\mathbf{U_2} - \mathbf{U_3}) = (\mathbf{U_2} - \mathbf{U_1}) \times
(\mathbf{U_3} - \mathbf{U_1}) = (\mathbf{U_3} - \mathbf{U_2}) \times
(\mathbf{U_1} - \mathbf{U_2}).
$$

We have to find the gradient
$\displaystyle\frac{\partial|T|}{\partial\mathbf{U}_{1,2,3}}$ with respect to
the vector composed by $\mathbf{U}_1, \mathbf{U}_2$ and $\mathbf{U}_3$.

The first three components of the resultant vector, i.e.
$\displaystyle\frac{\partial|T|}{\partial\mathbf{U}_1}$ are
obtained with the following method:
\begin{equation}\begin{split}
	\frac{\partial|T|}{\partial\mathbf{U}_1} =
\frac{\partial1/2||\mathbf{N}||}{\partial\mathbf{U}_1} &=
\frac{1}{2}{\frac{\partial\left({\mathbf{N \cdot
N}}\right)^{1/2}}{\partial\mathbf{U}_1}} \\ &=
\frac{1}{2}\frac{1}{2}{\left(\mathbf{N \cdot N}\right)}^{-1/2}
{\frac{\partial\left(\mathbf{N \cdot N}\right)}{\partial\mathbf{U}_1}} \\
&= \frac{1}{4}\frac{1}{\left(\mathbf{N \cdot
N}\right)^{1/2}}2\frac{\partial\mathbf{N}}{\partial\mathbf{U}_1}\mathbf{N} \\
&=\frac{1}{4}\frac{1}{|T|}\frac{\partial\mathbf{N}}
{\partial\mathbf{U}_1}\mathbf{N}. \label{dT1}
\end{split}\end{equation}

We can now derive
$\displaystyle\frac{\partial\mathbf{N}}{\partial\mathbf{U}_1}$. We will examine
the increment $\mathbf{N}\pol{\varepsilon}$ of $\mathbf{N}$ with respect to
$\mathbf{U}_1$, with $\mathbf{U}_1\pol{\varepsilon} = \mathbf{U}_1 +
\varepsilon\mathbf{v}$. We know from definition that:
\begin{equation}
	\mathbf{N}\pol{\varepsilon} = (\mathbf{U_1}\pol{\varepsilon} - \mathbf{U_3})
\times (\mathbf{U_2} - \mathbf{U_3}) = \mathbf{N} +
\varepsilon\frac{\partial\mathbf{N}} {\partial\mathbf{U}_1}\mathbf{v} +
o(\varepsilon),
\end{equation}
and expanding $\mathbf{N}\pol{\varepsilon}$ we obtain:
\begin{equation}\begin{split}
	\mathbf{N}\pol{\varepsilon} &=
(\mathbf{U_1}\pol{\varepsilon} - \mathbf{U_3}) \times (\mathbf{U_2} - \mathbf{U_3})
\\ &= (\mathbf{U_1} + \varepsilon\mathbf{v} - \mathbf{U_3}) \times (\mathbf{U_2}
- \mathbf{U_3})
\\ &= (\mathbf{U_1} - \mathbf{U_3}) \times (\mathbf{U_2} - \mathbf{U_3}) +
(\varepsilon\mathbf{v}) \times (\mathbf{U_2} - \mathbf{U_3})
\\ &= \mathbf{N} + \varepsilon\mathbf{v}\times({\mathbf{U}_2 - \mathbf{U}_3})
\\ &= \mathbf{N} - \varepsilon({\mathbf{U}_2 -
\mathbf{U}_3})\times\mathbf{v}.
\end{split}
\end{equation}

Any cross product between two vectors can be expressed with an antisymmetric
tensor, associated to the first vector, applied to the second one, such that
\begin{equation}\label{antysim}
	\mathbf{w} \times \mathbf{v} = \mathbf{W}\mathbf{v}, \mbox{ and }
\mathbf{W}^t\mathbf{v} = - \mathbf{w} \times \mathbf{v} = \mathbf{v} \times
\mathbf{w}, \quad \forall \ \mathbf{v}\! \in \mathbb{R}^n
\end{equation}
therefore
\begin{equation}\label{W}
	\left(\frac{\partial\mathbf{N}}{\partial\mathbf{U}_1}\right)\mathbf{v}
= \mathbf{W}\mathbf{v} = - ({\mathbf{U}_2 -
\mathbf{U}_3})\times\mathbf{v}.
\end{equation}

We need only the value of $\frac{\partial\mathbf{N}} {\partial\mathbf{U}_1}$
applied to e generic vector $v$, to find the value of \eqref{dT1}.
Substituting \eqref{W} in \eqref{dT1} we obtain:
\begin{equation}\label{dT1final}
	\left(\frac{\partial|T|}{\partial\mathbf{U}_1}\right) =
\frac{-1}{4|T|}\left[ \mathbf{N}\times(\mathbf{U}_2 -
\mathbf{U}_3)\right]^t.
\end{equation}

To obtain the entire gradient, it's enough to obtain the equations equivalent
to \eqref{dT1final} for $\frac{\partial}{\partial\mathbf{U}_2}$ and
$\frac{\partial}{\partial\mathbf{U}_3}$.

This is the result:
$$ \frac{\partial|T|}{\partial\mathbf{U}_{1,2,3}} =
\frac{-1}{4|T|}
\Big[
\left[\mathbf{N \times (U_2 - U_3)} \right]^t,\,
\left[\mathbf{N \times (U_3 - U_1)} \right]^t,\,
\left[\mathbf{N \times (U_1 - U_2)} \right]^t
\Big].
$$

Intuitively this result indicates for each vertex a variation proportional to
half of the opposite side of the triangle (base) and with the greatest rate of
increase in the orthogonal direction with respect to that side (height).

\bibliographystyle{plainnat}
\bibliography{ParigiPiastra}

\end{document}